\title[Maximum Likelihood Signal Matrix Model for Data-Driven Predictive Control]{Maximum Likelihood Signal Matrix Model for Data-Driven \\Predictive Control}
\newcommand{\norm}[1]{\left\lVert#1\right\rVert}
\newcommand{\pazocal}{\mathcal}
\newtheorem{prop}{Proposition}
\newtheorem{ex}{Example}
\author{\Name{Mingzhou Yin} \Email{myin@control.ee.ethz.ch}\\
  \Name{Andrea Iannelli} \Email{iannelli@control.ee.ethz.ch}\\
  \Name{Roy S. Smith} \Email{rsmith@control.ee.ethz.ch}\\
  \addr Automatic Control Laboratory, Swiss Federal Institute of Technology, 8092 Zurich, Switzerland}
\begin{document}

\iffalse
\fbox{\begin{minipage}{36em}
Yin M., Iannelli A., Smith R.S. (2020). Maximum Likelihood Signal Matrix Model for Data-Driven Predictive Control. Submitted. arXiv:2012.04678.

This work has been submitted to 3rd Annual Learning for Dynamics \& Control Conference for possible publication.
\end{minipage}}
\fi

\maketitle\thispagestyle{empty}

\vspace{-2em}

\fbox{\begin{minipage}{0.9\linewidth}
Mingzhou Yin, Andrea Iannelli, Roy S. Smith. Maximum Likelihood Signal Matrix Model for Data-Driven Predictive Control. \textit{Proceedings of the 3rd Conference on Learning for Dynamics and Control}, PMLR 144:1004-1014, 2021.

\copyright\ 2021 the authors. This work has been presented at the 3rd Conference on Learning for Dynamics and Control and published under a Creative Commons Licence CC-BY-NC-ND.
\end{minipage}}

\vspace{2em}

\begin{abstract}%
The paper presents a data-driven predictive control framework based on an implicit input-output mapping derived directly from the signal matrix of collected data. This signal matrix model is derived by maximum likelihood estimation with noise-corrupted data. By linearizing online, the implicit model can be used as a linear constraint to characterize possible trajectories of the system in receding horizon control. The signal matrix can also be updated online with new measurements. This algorithm can be applied to large datasets and slowly time-varying systems, possibly with high noise levels. An additional regularization term on the prediction error can be introduced to enhance the predictability and thus the control performance. Numerical results demonstrate that the proposed signal matrix model predictive control algorithm is effective in multiple applications and performs better than existing data-driven predictive control algorithm.
\end{abstract}

\begin{keywords}%
Predictive control, data-driven optimization, maximum likelihood estimation
\end{keywords}

\section{Introduction}

In control applications, despite the wide use of data to understand the response of a system, conventional methods are focused on identifying a compact model of the system based on first principle model structures or low-order approximation (\cite{LjungBook2}). This enables the design of simple but effective feedback control laws. However, this idea of model-based control is challenged by the increasing complexity of modern systems, where compact models are hard to obtain (\cite{Hjalmarsson_2005}). On the other hand, modern systems are usually associated with a large set of available data. Therefore, model-free approaches have been proposed to learn the control action directly from collected data without an explicit formulation of the system model, e.g., reinforcement learning (\cite{lagoudakis2003least,powell2007approximate}). However, it has been observed in \cite{Recht_2019} that these approaches tend to perform in a much less data-efficient manner in simple tasks compared to model-based methods.

In this paper, we focus on a control design strategy which lies in between the model-based and the model-free approaches. This strategy still aims to obtain a description of the system, but in the form of a non-parametric input-output mapping instead of a compact parametric model. In detail, the input-output mapping is derived based on the Willems’ fundamental lemma proposed in \cite{Willems_2005} that characterizes all possible trajectories of length up to the order of persistency of excitation by constructing Hankel matrices from data. This implicit representation of the system has been successfully applied to simulation and control problems with noise-free data in \cite{Markovsky_2008,Coulson_2019,DePersis_2020}. These results in the noise-free case are extended to noise-corrupted datasets by maximum likelihood estimation in \cite{yin2020maximum}. The derived data-driven input-output mapping, dubbed the signal matrix model (SMM), provides a statistically optimal formulation to predict the response of the system.

This paper focuses on the application of SMM to receding horizon predictive control algorithms. This optimization-based control algorithm is suitable to solve optimal trajectory tracking problems with constraints. When a model of the system is available, it is commonly known as model predictive control (MPC) (\cite{camacho2016model}). In this work, the model-based predictor is replaced with a linearized SMM constraint obtained directly from the data-based signal matrix. This algorithm is efficient in handling large datasets by data preconditioning without increasing online computational complexity. It is demonstrated by numerical examples that this algorithm performs better than existing data-driven predictive control algorithm with noisy data. This paper proposes multiple novel extensions of the algorithm and analyzes their performances with numerical examples. Online data can be incorporated in the signal matrix to make the algorithm adaptive. This improves our knowledge of the system online, and extend the algorithm to slowly time-varying systems. When data with high noise levels are treated, better initial condition estimation can be achieved with longer past trajectory measurements. The predictability along the closed-loop trajectories is enhanced by introducing a prediction error regularization term in the objective function, and thus the control performance can be improved as well.

%\textit{Notations.} For a vector $x$, the weighted $l_2$-norm $(x^\mathsf{T}Px)^{\frac{1}{2}}$ is denoted by $\norm{x}_P$. The symbol $\mathcal{N}(\mu,\Sigma)$ indicates a Gaussian distribution with mean $\mu$ and covariance $\Sigma$. For a matrix $X$, $(X)_{i,j}$ denotes the $(i,j)$-th entry of $X$. For a sequence of matrices $X_1,\dots,X_n$, we denote $[X_1^\mathsf{T}\ \dots\ X_n^\mathsf{T}]^\mathsf{T}$ by $\text{col}\left(X_1,\dots,X_n\right)$. Given a discrete-time signal $x:\mathbb{Z}\to \mathbb{R}^n$, its trajectory from time $k$ to time $k+N-1$ is represented by $(x_i)_{i=k}^{k+N-1}=\text{col}(x_k,\dots,x_{k+N-1})$.

\section{Data-Driven Predictive Control}

Consider a minimal discrete-time linear time-invariant (LTI) dynamical system with output noise, given by
\begin{equation}
\begin{cases}
x_{t+1}&=\ A x_t+B u_t,\\
\hfil y_t&=\ C x_t + D u_t + w_t,
\label{eq:sys}
\end{cases}
\end{equation}
where $x_t \in \mathbb{R}^{n_x}$, $u_t \in \mathbb{R}^{n_u}$, $y_t \in \mathbb{R}^{n_y}$, $w_t \in \mathbb{R}^{n_y}$ are the states, inputs, outputs, and output noise respectively. The model parameters $(A,B,C,D)$ are unknown, but input-output trajectory data $(u_i^d, y_i^d)_{i=0}^{N-1}$ of the system have been collected.

We are interested in designing a data-driven controller that obtains optimal tracking to a reference trajectory $\mathbf{r}$ in the sense of minimizing the total cost:\begin{equation}
    J_{\text{tot}}=\sum_{t=0}^{N_c-1} J_t=\sum_{t=0}^{N_c-1}\left(\norm{y^0_t-r_t}_Q^2+\norm{u_t}_R^2\right),
\end{equation}
where $J_t$ is the stage cost at time $t$, $y^0_t$ is the noise-free output, $Q$ and $R$ are the output and the input cost matrices respectively, and $N_c$ is the length of the control task. This can be achieved by receding horizon predictive control. In detail, at each time instant $t$, the controller predicts an admissible input-output trajectory of length $L'$ of the system that minimizes the predicted cost of future $L'$ steps online, i.e.,
\begin{subequations}
  \begin{empheq}{align}\underset{\hat{\mathbf{u}},\hat{\mathbf{y}}}{\text{minimize}}\quad & J_{\text{ctr}}:=\sum_{k=0}^{L'-1}\left(\norm{\hat{y}_k-r_{t+k}}_Q^2+\norm{\hat{u}_k}_R^2\right)\\
    \text{subject to}\quad & \left(\begin{bmatrix}\mathbf{u}_p\\\hat{\mathbf{u}}\end{bmatrix},\begin{bmatrix}\mathbf{y}_p\\\hat{\mathbf{y}}\end{bmatrix}\right)\text{ is a possible trajectory of (\ref{eq:sys})},\label{eqn:4b}\\&\hat{\mathbf{u}} \in \mathcal{U}, \hat{\mathbf{y}} \in \mathcal{Y},
    \end{empheq}
    \label{eqn:pc}%
\end{subequations}
where $\mathbf{u}_p=(u_i)_{i=-\infty}^{t-1}$, $\mathbf{y}_p=(y_i)_{i=-\infty}^{t-1}$ are the immediate past input-output trajectory, and $\mathcal{U}	\subseteq\mathbb{R}^{L'n_u}$, $\mathcal{Y}\subseteq\mathbb{R}^{L'n_y}$ are the sets of admissible inputs and outputs respectively. Then, the first entry in the newly optimized input sequence is applied to the system in a receding horizon fashion.

In standard model predictive control, the constraint (\ref{eqn:4b}) is enforced based on a known model of the system, such as (\ref{eq:sys}) with a state estimation scheme. Alternatively, in the data-driven predictive control framework, this constraint is directly formulated from the collected data of the systems, namely $(u_i^d, y_i^d)_{i=0}^{N-1}$, without knowing the model parameters. In this regard, it is proved in \cite{Willems_2005} and, in a state-space context, \cite{DePersis_2020,vanWaarde_2020} that all the possible trajectories of a linear system within a time horizon can be captured by a single persistently exciting trajectory of the system when no noise is present. Based on these results, the following proposition holds.

\begin{prop}
    Assume that 1) the system is noise-free, i.e., $\forall i, w_i=0$; 2) the collected input trajectory is persistently exciting of order $(L+n_x)$, i.e., the block Hankel matrix constructed from $(u_i^d)_{i=0}^{N-1}$ with $(L+n_x)$ block rows
    %\begin{equation}
    %    U_{\textnormal{PE}}=\begin{bmatrix}
    %    u_0^d&u_1^d&\cdots&u^d_{N-L-n_x}\\
    %    u_1^d&u_2^d&\cdots&u^d_{N-L-n_x+1}\\
    %    \vdots&\vdots&\ddots&\vdots\\
    %    u_{L+n_x-1}^d&u_{L+n_x}^d&\cdots&u^d_{N-1}
    %    \end{bmatrix}
    %\end{equation}
    has full row rank; and 3) $L_0:=L-L'\geq n_x$.
    Then constraint (\ref{eqn:4b}) is equivalent to the following statement: there exists $g$, such that
    \begin{equation}
    \begin{bmatrix}\mathbf{u}_{\textnormal{ini}}(t)\\\hat{\mathbf{u}}(t)\\\hline\mathbf{y}_{\textnormal{ini}}(t)\\\hat{\mathbf{y}}(t)\end{bmatrix}
    =\begin{bmatrix}U\\Y\end{bmatrix}g(t):=\begin{bmatrix}
        u_0^d&u_1^d&\cdots&u_{M-1}^d\\
        \vdots&\vdots&\ddots&\vdots\\
        u_{L-1}^d&u_L^d&\cdots&u_{N-1}^d\\\hline
        y_0^d&y_1^d&\cdots&y_{M-1}^d\\
        \vdots&\vdots&\ddots&\vdots\\
        y_{L-1}^d&y_L^d&\cdots&y_{N-1}^d\\
        \end{bmatrix}g(t),
        \label{eqn:fund}
    \end{equation}
    where $\mathbf{u}_{\textnormal{ini}}(t)=(u_i)_{i=t-L_0}^{t-1}$, $\mathbf{y}_{\textnormal{ini}}(t)=(y_i)_{i=t-L_0}^{t-1}$, and $M=N-L+1$.
    \label{prop:1}
\end{prop}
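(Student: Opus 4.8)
The plan is to recognize this statement as a data-driven-control restatement of Willems' fundamental lemma and to prove it in the behavioral framework by a dimension-counting argument. Write $\mathcal{B}_L \subseteq \mathbb{R}^{L(n_u+n_y)}$ for the set of all length-$L$ input-output trajectories of (\ref{eq:sys}); linearity of the system makes $\mathcal{B}_L$ a linear subspace. The proof splits into two tasks: showing that (\ref{eqn:4b}) is equivalent to the length-$L$ window $(\mathbf{u}_{\textnormal{ini}},\hat{\mathbf{u}},\mathbf{y}_{\textnormal{ini}},\hat{\mathbf{y}})$ lying in $\mathcal{B}_L$, and showing that membership in $\mathcal{B}_L$ is exactly membership in the column span of $\begin{bmatrix} U \\ Y \end{bmatrix}$.

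For the first task, one direction is immediate: any restriction of a trajectory of (\ref{eq:sys}) is again a trajectory, so if (\ref{eqn:4b}) holds then the length-$L$ window lies in $\mathcal{B}_L$. For the converse I would invoke assumption (3): since $L_0 \geq n_x$ and the system is minimal, the initial window $(\mathbf{u}_{\textnormal{ini}},\mathbf{y}_{\textnormal{ini}})$ of length $L_0$ is consistent with a unique state at time $t-L_0$, which must coincide with the actual state that generated the measured past $(\mathbf{u}_p,\mathbf{y}_p)$. Hence any length-$L$ trajectory matching $(\mathbf{u}_{\textnormal{ini}},\mathbf{y}_{\textnormal{ini}})$ continues the true past, so its future part is admissible in (\ref{eqn:4b}). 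This reduces the claim to the equality $\mathcal{B}_L = \operatorname{colspan}\begin{bmatrix} U \\ Y \end{bmatrix}$.

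The inclusion $\operatorname{colspan}\begin{bmatrix} U \\ Y \end{bmatrix} \subseteq \mathcal{B}_L$ is the easy half: each column is a length-$L$ window of the recorded trajectory, hence lies in $\mathcal{B}_L$, and superposition extends this to all combinations $\begin{bmatrix} U \\ Y \end{bmatrix} g$. For the reverse inclusion I would count dimensions. Parametrizing a length-$L$ trajectory by its initial state and input through $\mathbf{y} = \mathcal{O}_L x_0 + \mathcal{T}_L \mathbf{u}$, with $\mathcal{O}_L$ the extended observability matrix and $\mathcal{T}_L$ the Toeplitz matrix of Markov parameters, observability of the minimal realization over the horizon $L \geq n_x$ makes $\mathcal{O}_L$ injective and gives $\dim \mathcal{B}_L = L n_u + n_x$. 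Collecting the window-initial states into $X_0 = [x_0^d\ \cdots\ x_{M-1}^d]$, the same relation yields the factorization
\begin{equation*}
\begin{bmatrix} U \\ Y \end{bmatrix} = \begin{bmatrix} I & 0 \\ \mathcal{T}_L & \mathcal{O}_L \end{bmatrix} \begin{bmatrix} U \\ X_0 \end{bmatrix},
\end{equation*}
whose left factor is injective; therefore $\operatorname{rank}\begin{bmatrix} U \\ Y \end{bmatrix} = \operatorname{rank}\begin{bmatrix} U \\ X_0 \end{bmatrix}$. If this rank equals $L n_u + n_x = \dim \mathcal{B}_L$, then $\operatorname{colspan}\begin{bmatrix} U \\ Y \end{bmatrix}$ is a subspace of $\mathcal{B}_L$ of full dimension, so the two coincide and every trajectory in $\mathcal{B}_L$ is produced by some $g$.

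I expect the rank identity $\operatorname{rank}\begin{bmatrix} U \\ X_0 \end{bmatrix} = L n_u + n_x$ to be the main obstacle, as it is the technical core of the fundamental lemma. The natural route is by contradiction: a nonzero left-annihilator of $\begin{bmatrix} U \\ X_0 \end{bmatrix}$ would, through the shift structure of the Hankel blocks together with controllability of $(A,B)$, force a linear dependence among the rows of the taller Hankel matrix built from $(u_i^d)$ with $L+n_x$ block rows, contradicting persistency of excitation of order $L+n_x$ in assumption (2). The delicate point is coupling the input-richness hypothesis with controllability so that the recorded states $X_0$ genuinely supply the extra $n_x$ directions beyond the $L n_u$ input directions; here both assumption (2) and minimality of the system are indispensable, and I would rely on the established argument of \cite{Willems_2005}, or its state-space versions in \cite{DePersis_2020,vanWaarde_2020}, for this step rather than rederiving it.
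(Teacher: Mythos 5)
Your proposal is correct, but it is far more explicit than what the paper actually provides: the paper's entire proof is a one-sentence citation, stating that the proposition is ``a direct extension of Proposition 1 in \cite{Markovsky_2008}'', with all of the adaptation to the receding-horizon constraint (\ref{eqn:4b}) left implicit. You instead reconstruct the argument in the behavioral framework: (i) reduce the infinite-past constraint (\ref{eqn:4b}) to membership of the length-$L$ window in the behavior $\mathcal{B}_L$, which is exactly where assumption (3) and minimality are consumed --- matching $(\mathbf{u}_{\textnormal{ini}},\mathbf{y}_{\textnormal{ini}})$ over $L_0\geq n_x$ steps determines the state at time $t-L_0$ by observability, so any completion in $\mathcal{B}_L$ necessarily continues the true past; (ii) obtain the inclusion of the column span of $\textnormal{col}(U,Y)$ in $\mathcal{B}_L$ by superposition; and (iii) close the argument by the dimension count $\dim\mathcal{B}_L=Ln_u+n_x$ and the factorization of $\textnormal{col}(U,Y)$ through $\textnormal{col}(U,X_0)$, reducing everything to the rank identity $\mathrm{rank}\;\textnormal{col}(U,X_0)=Ln_u+n_x$, which you correctly identify as the technical core of the fundamental lemma and defer to \cite{Willems_2005,DePersis_2020,vanWaarde_2020}. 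That deferral leaves your proof not fully self-contained, but it is no weaker than the paper's own citation-only proof, and the cited results establish precisely that identity under persistency of excitation of order $L+n_x$ and controllability. What the comparison buys: the paper's one-liner conceals step (i), which is the part genuinely specific to the predictive-control setting (the cited simulation result concerns trajectories specified by data prefixes, not a constraint involving the entire measured past), whereas your proof isolates it, shows exactly where each hypothesis enters, and makes clear that the remaining content is the standard fundamental lemma.
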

\begin{proof}
    This is a direct extension of Proposition 1 in \cite{Markovsky_2008}.
\end{proof}
Equation (\ref{eqn:fund}) provides an implicit input-output mapping between $\hat{\mathbf{u}}$ and $\hat{\mathbf{y}}$ with an intermediate parameter $g$, by constructing the matrix $\text{col}\left(U,Y\right)$. This matrix is constructed from offline measurement data and referred to as the signal matrix in the remainder of the paper. By replacing the constraint (\ref{eqn:4b}) with the parametric relation (\ref{eqn:fund}), the predictive controller can be designed without any knowledge of the model parameters in (\ref{eq:sys}). This method is known as the data-enabled predictive control (DeePC) algorithm (\cite{Coulson_2019}).

\section{The Maximum Likelihood Signal Matrix Model}

A strong assumption in Proposition~\ref{prop:1} is that the data are noise-free, under which the signal matrix is rank deficient with $\text{rank}\left(\text{col}\left(U,Y\right)\right)=n_x+n_u L$ as shown in Theorem 2 of \cite{Moonen_1989}. This guarantees that, though (\ref{eqn:fund}) is a highly underdetermined linear system when a large dataset is available, the input-output mapping from $\hat{\mathbf{u}}$ to $\hat{\mathbf{y}}$ is still well-defined. However, when the data are noise-corrupted, the signal matrix has full row rank almost surely, and thus for any $\hat{\mathbf{u}},\hat{\mathbf{y}}$, there exists $g$ that satisfies (\ref{eqn:fund}) almost surely. The input-output mapping becomes ill-conditioned.

In order to extend the DeePC algorithm to noise-corrupted data, empirical regularization terms on $g$ have been introduced in the objective function. The optimization problem becomes
\begin{equation}
  \begin{aligned}\underset{\hat{\mathbf{u}},\hat{\mathbf{y}},g}{\textnormal{minimize}}\quad & J_{\textnormal{reg}}:= J_{\text{ctr}} +\lambda_g\norm{g}_p^p+\lambda_y\norm{Y_p g-\mathbf{y}_{\text{ini}}}_p^p\\
    \textnormal{subject to}\quad &\textnormal{col}(\mathbf{u}_{\textnormal{ini}},\hat{\mathbf{u}},\hat{\mathbf{y}})=\textnormal{col}(U,Y_f)\,g,\ \hat{\mathbf{u}} \in \mathcal{U},\ \hat{\mathbf{y}} \in \mathcal{Y},
    \end{aligned}
    \label{eqn:deepc}%
\end{equation}
where $Y_p$ is the first $L_0\cdot n_y$ rows of $Y$, $Y_f$ is the last $L'\cdot n_y$ rows of $Y$, and $p=1\text{ or }2$ depending on the noise model. This is known as the regularized DeePC algorithm (\cite{Coulson_2019_reg}). However, this method requires tuning of the hyperparameters $\lambda_g,\lambda_y$, and the input-output mapping obtained from the regularized problem is affected by the control cost $J_{\text{ctr}}$.

In this paper, we investigate an alternative approach to obtain a tuning-free, statistically optimal, and well-defined input-output mapping with noise-corrupted data by maximum likelihood estimation. This input-output mapping, known as the signal matrix model (\cite{yin2020maximum}), can then be applied in place of constraint (\ref{eqn:4b}) in the optimization problem when noisy data are available.

Specifically, consider the case where both the signal matrix and online measurements of $\mathbf{y}_{\text{ini}}$ are corrupted with i.i.d. Gaussian output noise:
\begin{equation}
\begin{cases}
    y_i^d = y_i^{d,0}+w_i^d,&(w_i^d)_{i=0}^{N-1}\sim \pazocal{N}(0,\sigma^2\mathbb{I}),\\
    \mathbf{y}_{\text{ini}} = \mathbf{y}_{\text{ini}}^{0}+\mathbf{w}_{p},& \mathbf{w}_{p}\sim \pazocal{N}(0,\sigma_{p}^2\mathbb{I}),
\end{cases}
\end{equation}
where for simplicity of exposition, we assume $n_y=1$. Define
$
    \tilde{\mathbf{y}}=\textnormal{col}\left(\epsilon_y,\hat{\mathbf{y}}\right)=Yg-\textnormal{col}\left(\mathbf{y}_{\text{ini}},\mathbf{0}\right),
$
where $\epsilon_y:=Y_pg-\mathbf{y}_{\text{ini}}$ is the total deviation of the past output trajectory from the noise-free case. Then the parameter $g$ can be estimated by maximizing the conditional probability density of observing the realization $\tilde{\mathbf{y}}$ corresponding to the available data given $g$, i.e., $\text{max}_g\ p(\tilde{\mathbf{y}}|g)$. This is equivalent to the following optimization problem:
\begin{equation}
    \underset{g\in\pazocal{G}}{\text{min}}\ \text{logdet}(\Sigma_y(g))+\begin{bmatrix}Y_pg-\mathbf{y}_{\text{ini}}\\\mathbf{0}\end{bmatrix}^\mathsf{T}\Sigma_y^{-1}(g)\begin{bmatrix}Y_pg-\mathbf{y}_{\text{ini}}\\\mathbf{0}\end{bmatrix}.
\label{eqn:opt0}
\end{equation}
where $\pazocal{G}$ is the parameter space defined by the noise-free input trajectory, namely
$
    \pazocal{G} = \{g\in \mathbb{R}^M\left|\,Ug=\textnormal{col}\left(\mathbf{u}_{\text{ini}},\hat{\mathbf{u}}\right)\right.\},
$
and $\Sigma_y(g)$ is the covariance of $\tilde{\mathbf{y}}$ given $g$ with entries
\begin{equation}
    \left(\Sigma_y(g)\right)_{i,j}=\sigma^2\sum_{k=1}^{M-|i-j|}g_k g_{k+|i-j|}+\begin{cases}\sigma_p^2,&i=j\leq L_0\\0,&\text{otherwise}\end{cases}.
    \label{eqn:py}
\end{equation}
To find a computationally efficient algorithm to solve (\ref{eqn:opt0}), we relax the problem by neglecting the cross-correlation between elements in the signal matrix, and thus the off-diagonal elements in $\Sigma_y(g)$. This leads to the following iterative algorithm:
\begin{equation}
    g^{k+1}=\text{arg}\underset{g\in\pazocal{G}}{\text{min}}\ \lambda(g^k)\norm{g}_2^2+\norm{Y_pg-\mathbf{y}_{\text{ini}}}_2^2,\quad \lambda(g^k) = \dfrac{L'\sigma_p^2}{\norm{g^k}_2^2}+L \sigma^2.
\label{eqn:optsqp}
\end{equation}
For details on the derivation of the maximum likelihood estimator and the associated iterative algorithm, readers are referred to Section IV in \cite{yin2020maximum}.

The maximum likelihood signal matrix model can be compactly represented as follows:
\begin{equation}
    \hat{\mathbf{y}}_\text{SMM}=Y_f\, g_\text{SMM}(\hat{\mathbf{u}};\mathbf{u}_{\text{ini}},\mathbf{y}_{\text{ini}},U,Y_p),
    \label{eqn:smm}
\end{equation}
where $g_\text{SMM}(\cdot)$ is the converged solution of the iterative algorithm (\ref{eqn:optsqp}). This model provides a data-driven input-output mapping when signals are measured with i.i.d. Gaussian output noise. Similar signal matrix models can be developed for alternative noise models, such as process noise.

\section{Data-Driven Predictive Control with the Signal Matrix Model}

In this section, the signal matrix model is applied to the data-driven predictive control framework as the predictor. Throughout this section, numerical examples tested on the following fourth-order LTI system:
\begin{equation}
    G(z) = \dfrac{0.1159(z^3+0.5z)}{z^4-2.2z^3+2.42z^2-1.87z+0.7225}
    \label{eqn:sys1}
\end{equation}
are considered. The goal is to achieve optimal output tracking of a sinusoidal reference trajectory $r_t = 0.5\sin\left(\frac{\pi}{10} t\right)$. The control design parameters are chosen as $Q=1$, $R=1$, $L'=10$, and $L_0=n_x=4$ with no input and output constraints, i.e., $\mathcal{U}=\mathcal{Y}=\mathbb{R}^{L'}$, unless otherwise specified. The length of the control task $N_c$ is 120. The offline input-output trajectory data are collected with unit i.i.d. Gaussian input signals, and the noise levels $\sigma^2$ and $\sigma_p^2$ are assumed known. To benchmark the performance, we consider the ideal deterministic MPC algorithm with knowledge of the true state-space model and the noise-free state measurements, which will be indicated by \textit{MPC} in the following figures.

The signal matrix model (\ref{eqn:smm}) can be directly incorporated in the receding horizon predictive control scheme (\ref{eqn:pc}) by replacing (\ref{eqn:4b}) with the SMM. However, the maximum likelihood estimator $g_\text{SMM}(\cdot)$ involves an iterative algorithm which does not have a closed-form solution. This makes the optimization problem computationally challenging to solve, especially in an online scheme. To address this problem, we propose warm-starting the iterative algorithm with the value of $g$ at the previous time instant and approximating $g_\text{SMM}(\cdot)$ with the first iterate, i.e.,
\begin{equation}
    g^t(\mathbf{\hat{u};\mathbf{u}_{\text{ini}},\mathbf{y}_{\text{ini}}},U,Y_p,g^{t-1})=\pazocal{P}(g^{t-1})\,\mathbf{y}_{\text{ini}}+\pazocal{Q}(g^{t-1})\,\textnormal{col}\left(\mathbf{u}_{\text{ini}},\hat{\mathbf{u}}\right),
\end{equation}
where, with an abuse of notation, $g^t(\cdot)$ denotes the approximation of $g_{\text{SMM}}(\cdot)$ with one iteration at time instant $t$, and the right hand side is the closed-form solution of the quadratic program (\ref{eqn:optsqp}) with $g^k=g^{t-1}$. This leads to the linearized signal matrix model:
\begin{equation}
    \hat{\mathbf{y}}_\text{LSMM}=Y_f\left(\pazocal{P}(g^{t-1})\,\mathbf{y}_{\text{ini}}+\pazocal{Q}(g^{t-1})\,\textnormal{col}\left(\mathbf{u}_{\text{ini}},\hat{\mathbf{u}}\right)\right),
    \label{eqn:lsmm}
\end{equation}
and thus the predictive control problem (\ref{eqn:pc}) becomes a quadratic program:
\begin{equation}
  \begin{aligned}\underset{\hat{\mathbf{u}},\hat{\mathbf{y}}}{\textnormal{minimize}}\quad & J_{\textnormal{ctr}}\\
    \textnormal{subject to}\quad &\hat{\mathbf{y}}=Y_f\left(\pazocal{P}(g^{t-1})\,\mathbf{y}_{\text{ini}}+\pazocal{Q}(g^{t-1})\,\textnormal{col}\left(\mathbf{u}_{\text{ini}},\hat{\mathbf{u}}\right)\right),\hat{\mathbf{u}} \in \mathcal{U}, \hat{\mathbf{y}} \in \mathcal{Y}.
    \end{aligned}
    \label{eqn:smmpc}%
\end{equation}
The value of $g$ is initialized with the pseudo-inverse solution to the reference output trajectory, i.e., $g^{\text{ini}}=\left[\text{col}\left(U_p,Y_p,Y_f\right)\right]^\dagger\text{col}\left(\mathbf{u}_0,\mathbf{y}_0,\mathbf{r}_0\right)$, where $\mathbf{u}_0=(u_i)_{i=-L_0}^{-1}$, $\mathbf{y}_0=(y_i)_{i=-L_0}^{-1}$, and $\mathbf{r}_0=(r_i)_{i=0}^{L'-1}$.

To assess the validity of the linearized SMM, we investigate the discrepancy between the linearized SMM and the iterative nonlinear SMM in the following example.
\begin{ex}
    Consider input-output trajectory data of length $N=50$ with noise level $\sigma^2=0.1$. The online measurement noise level is $\sigma_p^2=0.1$. The data-driven predictive control is conducted by solving (\ref{eqn:smmpc}) with the linearized SMM constraint. At each time instant $t=0,1,\dots,N_c-1$, the results of the linearized SMM predictor (\ref{eqn:lsmm}) and the iterative SMM predictor (\ref{eqn:smm}) are compared with the newly optimized input sequence and quantify their discrepancy by
    \begin{equation}
        \text{E} = \frac{\norm{\hat{\mathbf{y}}_\text{LSMM}-\hat{\mathbf{y}}_\text{SMM}}_2}{\norm{\hat{\mathbf{y}}_\text{SMM}}_2}.
    \end{equation}
    The histogram of $E$ is plotted in Figure~\ref{fig:1}.
    \begin{figure}[htb]
        \centering
        \includegraphics{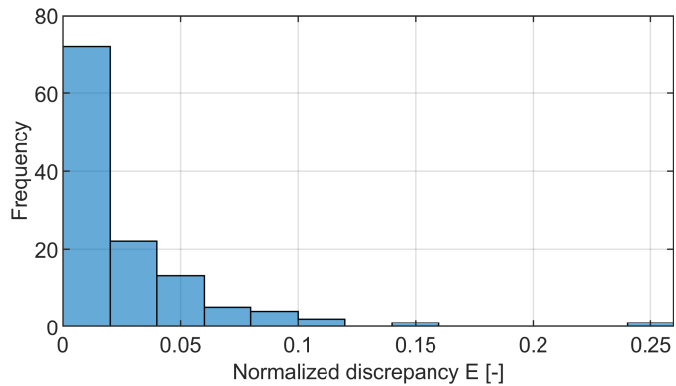}
        \vspace{-1em}
        \caption{Normalized discrepancy between the linearized SMM and the iterative SMM.}
        \label{fig:1}
    \end{figure}
    \label{ex:1}
\end{ex}
\vspace{-1em}
It can be seen from Figure~\ref{fig:1} that the error resulting from the linearization of SMM is less than 5\% for most time instants, which shows that the linearized SMM can obtain an accurate input-output mapping. Therefore, in the following analysis, the linearized SMM (\ref{eqn:lsmm}) will be employed.

In the first test, we compare the signal matrix model predictive control (SMM-PC) with the regularized DeePC algorithm in the following example.
\begin{ex}
    Consider an MPC problem with the same trajectory data and online measurement noise as in Example~\ref{ex:1}. In addition to SMM-PC, the regularized DeePC algorithm (\ref{eqn:deepc}) is also applied. The hyperparameter $\lambda_g$ is selected from a nine-point logarithmic grid between 10 and 1000 with an oracle (choosing the value with the optimal performance a posteriori); $\lambda_y$ is fixed to 1000 since the control performance is not sensitive to it in this example. 50 Monte Carlo simulations are conducted, and the closed-loop input-output trajectories within one standard deviation are plotted in Figure~\ref{fig:2}.
    \begin{figure}[htb]
        \centering
            \includegraphics{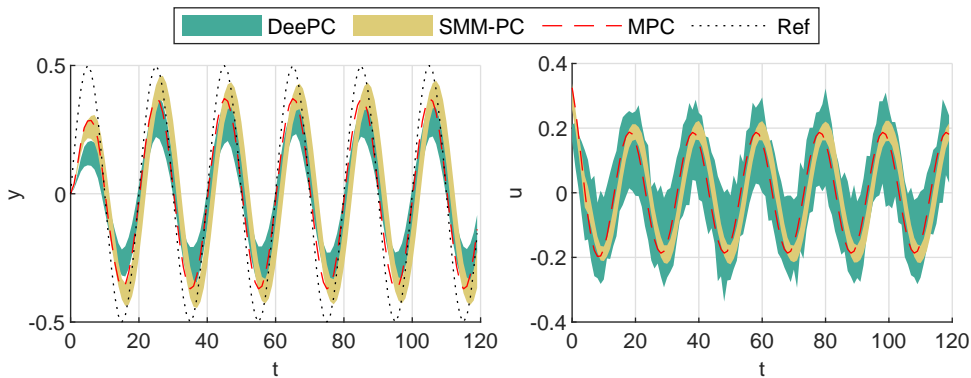}
        \vspace{-1em}
        \caption{Closed-loop input-output trajectories with regularized DeePC and linearized SMM.}
        \label{fig:2}
    \end{figure}
\end{ex}
\vspace{-1em}
As shown in Figure~\ref{fig:2}, the output trajectory obtained with the SMM-PC algorithm is closer to the reference trajectory with smaller inputs compared to the regularized DeePC algorithm. In addition, the applied input trajectory is much smoother for the SMM-PC algorithm, which indicates that the input-output mapping predicted by the linearized SMM is more accurate compared to the regularization method. For more detailed comparison with the DeePC algorithm, readers are referred to Section~VI.C in \cite{yin2020maximum}.

The data-driven formulation of the input-output mapping with the linearized signal matrix model enables flexible extensions to the algorithm under multiple scenarios, including large datasets, poor data quality, and time-varying systems. These extensions will be discussed as follows.

When a large set of input-output trajectory data are collected, i.e., $N\gg L$, the dimension of the intermediate parameter $g\in \mathbb{R}^M$ will be much larger than the time horizon of the predictive control. This is an undesired feature as it leads to high computational complexity online. Therefore, we propose the following strategy to compress the dimension of $g$ to $2L$ regardless of the data length.
\begin{prop}
    Let the singular value decomposition (SVD) of the signal matrix be $\textnormal{col}\left(U,Y\right)=WSV^\mathsf{T}\in\mathbb{R}^{2L\times M}$. Define the compressed signal matrix as $\textnormal{col}\,(\tilde{U},\tilde{Y})=WS_{2L}\in\mathbb{R}^{2L\times 2L}$, where $S_{2L}$ is the first $2L$ columns of $S$. Then the SMM-PC algorithm with the signal matrix $\textnormal{col}\left(U,Y\right)$ is equivalent to that with the compressed signal matrix $\textnormal{col}\,(\tilde{U},\tilde{Y})$.
    \label{prop:2}
\end{prop}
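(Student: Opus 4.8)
The plan is to exploit the fact that, throughout the SMM-PC algorithm, the parameter $g$ is read only through the signal-matrix product $\textnormal{col}(U,Y)\,g$ and through the regularizer $\norm{g}_2^2$. I would first introduce the orthogonal change of variables $h=V^\mathsf{T}g$ and partition $h=\textnormal{col}(\tilde{g},h_2)$, with $\tilde{g}\in\mathbb{R}^{2L}$ collecting the first $2L$ coordinates. Since $\textnormal{col}(U,Y)\in\mathbb{R}^{2L\times M}$ with $M>2L$ has at most $2L$ nonzero singular values, the factor $S$ has the block form $S=[\,S_{2L}\ \ 0\,]$, so that $\textnormal{col}(U,Y)\,g=WSV^\mathsf{T}g=WSh=WS_{2L}\tilde{g}=\textnormal{col}(\tilde{U},\tilde{Y})\,\tilde{g}$. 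Because the row partition of $WS_{2L}$ into $U$- and $Y$-blocks matches that of the original matrix, this yields $Ug=\tilde{U}\tilde{g}$, $Y_pg=\tilde{Y}_p\tilde{g}$, and $Y_fg=\tilde{Y}_f\tilde{g}$; that is, every quantity the algorithm extracts from $g$ depends on it only through the reduced variable $\tilde{g}$.

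Next I would track the regularizer under the same substitution. Orthogonality of $V$ gives $\norm{g}_2^2=\norm{h}_2^2=\norm{\tilde{g}}_2^2+\norm{h_2}_2^2$, so the extra coordinates $h_2$ enter the cost of (\ref{eqn:optsqp}) only through the nonnegative term $\lambda(g^k)\norm{h_2}_2^2$ and are absent from the constraint $Ug=\textnormal{col}(\mathbf{u}_\textnormal{ini},\hat{\mathbf{u}})$, which becomes $\tilde{U}\tilde{g}=\textnormal{col}(\mathbf{u}_\textnormal{ini},\hat{\mathbf{u}})$ and leaves $h_2$ free. Since in the noisy setting $\lambda(g^k)\geq L\sigma^2>0$, every minimizer of (\ref{eqn:optsqp}) satisfies $h_2=0$, so the quadratic program reduces exactly to the identical program posed in $\tilde{g}$ with the compressed matrix $\textnormal{col}(\tilde{U},\tilde{Y})$, and at its optimum $\norm{g}_2^2=\norm{\tilde{g}}_2^2$.

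Finally, I would propagate this correspondence along the whole recursion. Because the weight update $\lambda(g^k)$ depends on $g^k$ only through $\norm{g^k}_2^2$, and each solve returns an iterate with $h_2=0$ so that $\norm{g^k}_2^2=\norm{\tilde{g}^k}_2^2$, the sequences of penalty weights, iterates, and warm-started values generated with $\textnormal{col}(U,Y)$ and with $\textnormal{col}(\tilde{U},\tilde{Y})$ coincide; moreover the pseudo-inverse initialization lies in the row space of the signal matrix and hence already satisfies $h_2=0$, so the match holds from the first step. It then follows that the predictor $\hat{\mathbf{y}}_\textnormal{LSMM}=Y_fg=\tilde{Y}_f\tilde{g}$, the feasible set of (\ref{eqn:smmpc}), and the resulting optimal inputs are identical for the two formulations, which is the claimed equivalence. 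I expect the main obstacle to be exactly this step: since $\lambda(g^k)$ is data-dependent, the equivalence does not follow from a single QP solve but requires verifying that $h_2=0$ is preserved at every iteration so that the penalty weights stay consistent.
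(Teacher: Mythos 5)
Your proof is correct and follows essentially the same route as the paper, whose ``proof'' simply defers to Proposition 1 of \cite{yin2020maximum}: that result rests on exactly your observation that the algorithm touches $g$ only through $\textnormal{col}(U,Y)\,g$ and $\norm{g}_2^2$, both of which reduce under the orthogonal change of variables $h=V^\mathsf{T}g$ to quantities depending on $\tilde{g}$ alone, with $h_2=0$ at any minimizer since $\lambda(g^k)>0$. Your extra bookkeeping---verifying that $h_2=0$ is preserved through the data-dependent weights $\lambda(g^k)$, the warm starts across time instants, and the pseudo-inverse initialization (whose row space is contained in that of the signal matrix)---is precisely the ``direct extension'' the paper invokes without spelling out, and you carry it out correctly.
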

\begin{proof}
    This is a direct extension of Proposition 1 in \cite{yin2020maximum}.
\end{proof}
Proposition~\ref{prop:2} shows that regardless of the size of the dataset, an SVD operation can be conducted offline to reduce the size of the signal matrix to a square matrix while obtaining exactly the same results as with the original signal matrix. In this way, the online computational complexity is only dependent on the control horizon.

\subsection{Incorporation of Online Data}

In addition to the offline data $(u_i^d, y_i^d)_{i=0}^{N-1}$, online measurements can also be used to construct the signal matrix. By doing so, the knowledge of the unknown system improves as the predictive controller is deployed. This is particularly useful when the quality of the offline data is poor or the model parameters are varying online. This strategy can be interpreted as an equivalent of adaptive MPC (\cite{Fukushima_2007}) for the signal matrix model.

At each time instant $t\geq L$, a new column can be added to the signal matrix with
\begin{equation}
    \begin{bmatrix}U_{t+1}\\Y_{t+1}\end{bmatrix}=\begin{bmatrix}
    \gamma U_{t}&(u_i)_{i=t-L+1}^{t}\\\gamma Y_{t}&(y_i)_{i=t-L+1}^{t}
    \end{bmatrix},
    \label{eqn:online}
\end{equation}
where $\text{col}\left(U_t,Y_t\right)$ is the adaptive signal matrix applied to the SMM-PC algorithm at time $t$, and $\gamma$ is the forgetting factor of previous trajectories. The factor $\gamma$ is chosen as 1 for LTI systems and $\gamma\in(0,1)$ when model parameter variations are expected. Similar to the offline signal matrix, the adaptive signal matrix can also be compressed online to maintain $2L$ columns. The incremental singular value decomposition algorithm in \cite{Brand_2002} can be applied to reduce the computational complexity of updating the compressed signal matrix, which makes use of the property that the SVD of $\text{col}\left(U_t,Y_t\right)$ in (\ref{eqn:online}) is already known from the previous timestep.

In the following two examples, the effects of incorporating online data are investigated under high noise levels and with slowly varying parameters respectively.
\begin{ex}
    Consider the case where the output measurements are very noisy with $\sigma^2=\sigma_p^2=1$. The offline data length is $N=100$. The SMM-PC algorithms are compared with the fixed signal matrix $\textnormal{col}\left(U,Y\right)$ and the adaptive signal matrix $\textnormal{col}\left(U_t,Y_t\right)$ with $\gamma=1$. The deviations of the closed-loop trajectories from the ideal MPC trajectory are plotted on the left of Figure~\ref{fig:3}. In addition, 200 Monte Carlo simulations are conducted, and the boxplot of the total cost $J_\textnormal{tot}$ is shown on the right of Figure~\ref{fig:3}.
    \begin{figure}[htb]
        \centering
            \includegraphics{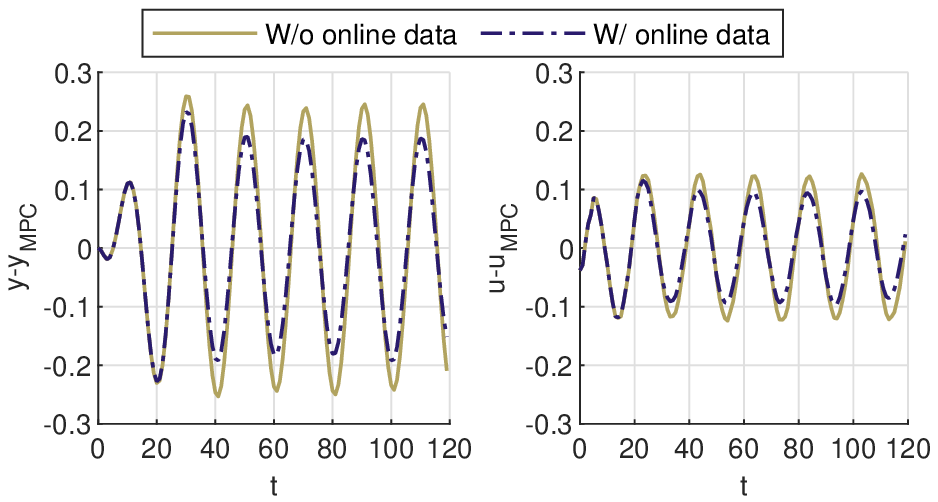}
            \includegraphics{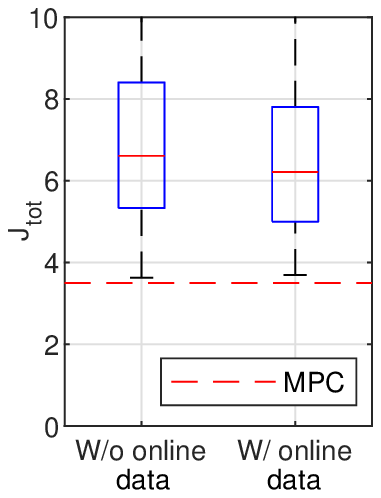}
        \vspace{-1em}
        \caption{Effects of online data adaptation in SMM-PC for datasets with high noise levels. Left: deviation from ideal MPC, right: boxplot of total cost $J_\textnormal{tot}$.}
        \label{fig:3}
    \end{figure}
    \label{ex:3}
\end{ex}
\vspace{-1em}
\begin{ex}
    Consider the case where one of the model parameters drifts slowly online with 
    \begin{equation}
        G(z) = \dfrac{0.1159(z^3+0.5z)}{z^4-2.2z^3+2.42z^2-\theta(t)z+0.7225},\, \theta(t) = \dfrac{1.87}{1+t/1500}.
    \end{equation}
    The offline data length and the noise levels are $N=50$, $\sigma^2=\sigma_p^2=0.01$ respectively. The SMM-PC algorithms are compared with the fixed signal matrix $\textnormal{col}\left(U,Y\right)$ and the adaptive signal matrix $\textnormal{col}\left(U_t,Y_t\right)$ with $\gamma=1,0.9,0.7,0.5$. The stage costs $J_t$ are plotted on the left of Figure~\ref{fig:4}, and the boxplot of the total cost $J_\textnormal{tot}$ is shown on the right of Figure~\ref{fig:4} with 50 Monte Carlo simulations.
    \begin{figure}[htb]
        \centering
            \includegraphics{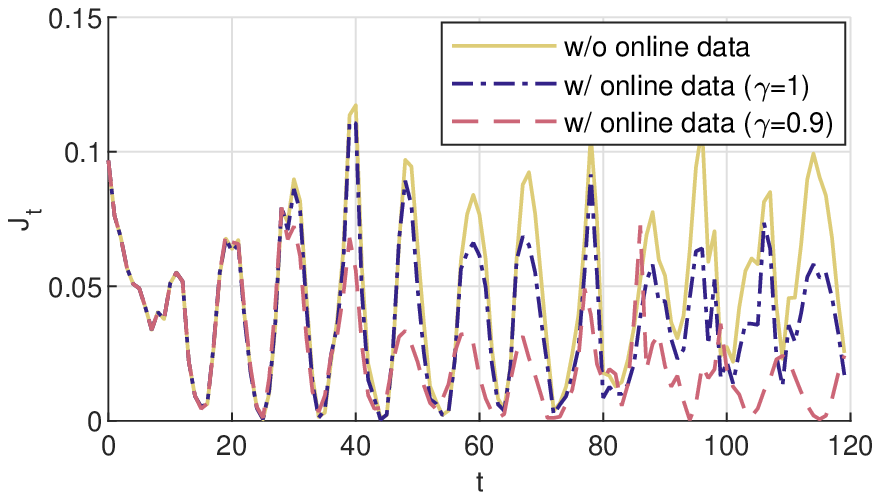}
            \includegraphics{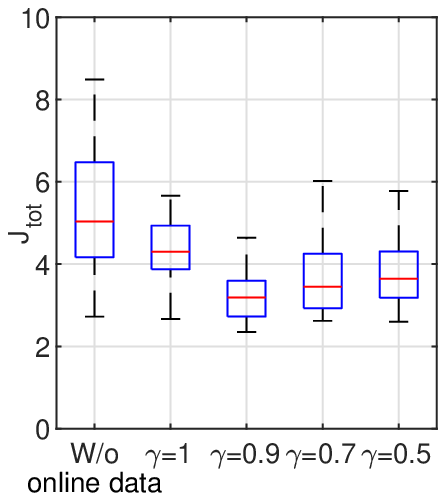}
        \vspace{-1em}
        \caption{Effects of online data adaptation in SMM-PC for slowly time-varying systems. Left: stage cost $J_t$, right: boxplot of total cost $J_\textnormal{tot}$.}
        \label{fig:4}
    \end{figure}
    \label{ex:4}
\end{ex}
\vspace{-1em}
As can be seen from Examples~\ref{ex:3}, the control performance improves by using adaptive signal matrices, as more online data are accumulated, though it cannot converge to the ideal MPC performance since the noise in $\mathbf{y}_{\text{ini}}$ still exists. For Example~\ref{ex:4}, the controller further benefits from a forgetting factor $\gamma<1$ to reduce the weight of previous trajectories with less accurate model parameters. The best performance is achieved with $\gamma=0.9$. These results demonstrate that incorporating online data is effective in achieving better control performance for both high noise levels and slowly time-varying systems.

\subsection{Initial Condition Estimation}

Another problem of measurements with high noise levels is the accuracy of initial condition estimation with the past trajectory measurements $\text{col}\left(\mathbf{u}_{\text{ini}}, \mathbf{y}_{\text{ini}}\right)$. Although the large prediction errors due to poor offline measurements can be compensated by online data adaptation as shown in the previous subsection, the validity of the linearized SMM still suffers from deviated initial conditions. In the noise-free case, Proposition~\ref{prop:1} holds for any $L_0\geq n_x$, which gives the exact initial condition of the trajectory. So it is always reasonable to select $L_0 = n_x$ to reduce computational complexity. This is also the choice in previous examples. However, in the presence of noise, the signal matrix model only gives an estimate of the initial condition, whose accuracy depends on the choice of $L_0$. Therefore, when abundant data are available but the noise level is high, it can be beneficial to select $L_0>n_x$ to obtain a more accurate initial condition estimation, as shown in the following example.
\begin{ex}
    Consider the same trajectory data and online measurement noise as in Example~\ref{ex:3}. The controller performance of SMM-PC is compared with $L_0=n_x=4$ and $L_0=10$. As another comparison, we introduce an additional MPC algorithm with a finite impulse response model obtained by simulating the signal matrix model with an impulse of $\mathbf{u}_{\text{ini}}=\mathbf{0},\,\mathbf{y}_{\text{ini}}=\mathbf{0},\,\mathbf{u}=\textnormal{col}(1,\mathbf{0}),\,\sigma_p=0$. This impulse MPC algorithm does not require the noisy past output measurements, so it circumvents the initial condition estimation problem. The stage costs $J_t$ are plotted in Figure~\ref{fig:5}. 
    \begin{figure}[htb]
        \centering
            \includegraphics{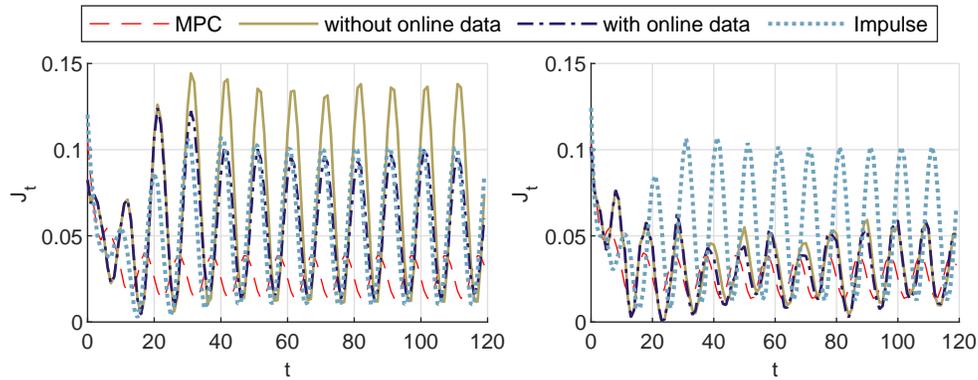}
        \vspace{-1em}
        \caption{Stage costs with different $L_0$. Left: $L_0=4$, right: $L_0=10$.}
        \label{fig:5}
    \end{figure}
\end{ex}
\vspace{-1em}
Although the impulse MPC algorithm only uses a fixed predictor identified offline, the results from Figure~\ref{fig:5} demonstrate that when $L_0=4$, SMM-PC performs worse than impulse MPC without online data and similarly when online data are incorporated, due to inaccurate initial condition estimation. When a larger value of $L_0=10$ is selected, the SMM-PC performs significantly better than impulse MPC and close to the ideal MPC algorithm. 

\subsection{Regularized Signal Matrix Model Predictive Control}
In addition to the input-output mapping, the signal matrix model also gives an uncertainty model of the output prediction with $\text{cov}(\hat{\mathbf{y}}|g)=\Sigma_{y_f}$, where $\Sigma_{y_f}$ denotes the last $L'$ rows and columns of $\Sigma_y$. So the variance of the output prediction is given by $\left(\Sigma_{y_f}\right)_{i,i}=\sigma^2\norm{g}_2^2$. In order to enhance the robustness of the SMM-PC algorithm, an additional regularization term can be introduced to the objective function:
\begin{equation}
    J_{\text{Reg-SMM}} = J_{\text{ctr}} + \zeta\cdot \sigma^2\norm{g^t}_2^2.
\end{equation}
Despite their similar forms, this regularization term is different from the $\lambda_g$-term in regularized DeePC, where the regularization is required to fix the underdetermined input-output mapping. In regularized SMM-PC, the input-output mapping is well-defined via the linearized SMM constraint (\ref{eqn:lsmm}) and the regularization term is used to penalize trajectories that result in large prediction errors. Tuning of the hyperparameter $\zeta$ is also easier than regularized DeePC since the dependence of the noise level $\sigma^2$ is already separated in the prediction error formulation.

In the following example, the performance of SMM-PC is compared for different choices of $\zeta$.
\begin{ex}
    Consider the same data generation parameters as in Example~\ref{ex:1} ($N=50,\sigma^2=\sigma_p^2=0.1$). The regularized SMM-PC algorithm is applied for $\zeta=0,1,10,10^2,10^3,10^4$ with 50 Monte Carlo simulations each. The boxplot of average $\norm{g^t}_2^2$ is shown in Figure~\ref{fig:6}(a) as a measure of prediction errors. The boxplots of the total cost $J_\textnormal{tot}$ and the input cost $J_\textnormal{tot}^u:=\sum_{t=0}^{N_c-1}\norm{u_t}_R^2$ are plotted in Figures~\ref{fig:6}(b) and (c) respectively.
    \begin{figure}[htb]
        \centering
        \includegraphics{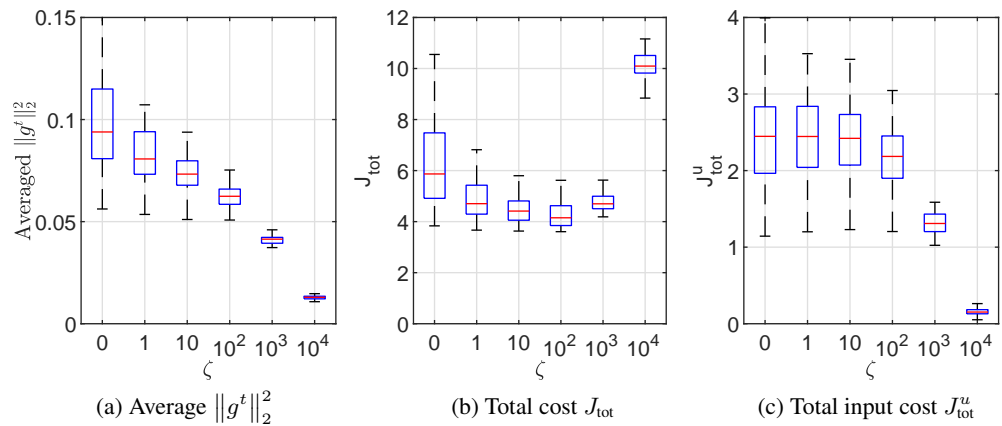}\\
        \footnotesize \hspace{2em} (a) Average $\norm{g^t}_2^2$ \hspace{7em} (b) Total cost $J_\textnormal{tot}$ \hspace{5.5em} (c) Total input cost $J^u_\textnormal{tot}$ 
        \caption{Performance of regularized SMM-PC with $\norm{g}_2^2$-regularization.}
        \label{fig:6}
    \end{figure}
\end{ex}
\vspace{-1em}
As can been seen from Figure~\ref{fig:6}, by increasing the hyperparameter $\zeta$, the SMM-PC algorithm tends to select trajectories with smaller prediction errors, which also correspond to more conservative input trajectories with smaller magnitudes. As a result, the average performance first increases due to better prediction and then decrease due to large output tracking errors with the increase of $\zeta$.

\section{Conclusions}

In this paper, a novel data-driven predictive control algorithm with the signal matrix model is investigated. The algorithm uses a linearized SMM to provide a close approximation to the maximum likelihood estimator of input-output trajectories. It combines the benefits of model-based and model-free methods by providing a non-parametric input-output mapping to characterize the system. By using the linearized SMM to implicitly parametrize future trajectories, the algorithm and its regularized version show good performance in complex scenarios with large datasets, poor data quality, parameter variation.

% Acknowledgments---Will not appear in anonymized version
\acks{This work was supported by the Swiss National Science Foundation under Grant 200021\_178890.}

\bibliography{refs}

\end{document}